\tikzset{edge from parent/.append style={<-}}
\tikzstyle{sn}=[circle, draw, align=center, fill=white!50, inner sep=0pt, minimum size=0.6cm]
\tikzstyle{kn}=[circle, draw, align=center, fill=white!50, inner sep=0pt,
\tikzstyle{fsn}=[circle, draw, fill=blue!30, inner sep=0pt, minimum size=0.6cm]
\newtheorem{theorem}{Theorem}
\newtheorem{lemma}{Lemma}
\newtheorem{observation}{Observation}
\pgfplotsset{votegas/.style={%
  ytick={1000000, 2000000, 3000000, 4000000, 5000000, 6000000,
  7000000, 8000000},
  ylabel= {Gas Cost},
  xlabel = {Length of delegate chain},
  ymin=0,
  ymax=9000000,
  xmax = 3200,
  xmin = 0,
  scale only axis
}
}
\begin{document}
	\pagestyle{empty}

	\title{Implement Liquid Democracy on Ethereum: A Fast Algorithm for Realtime Self-tally Voting System}
\author[1]{Xuepeng Fan \thanks{xuepeng.fan@asresearch.io}}
\author[2]{Peng Li \thanks{pengli@u-aizu.ac.jp}}
\author[1]{Yulong Zeng \thanks{yulong.zeng@asresearch.io}}
\author[2]{Xiaoping Zhou \thanks{d8211110@u-aizu.ac.jp}}
\affil[1]{Autonomous System Research}
\affil[2]{The University of Aizu}

\maketitle
\begin{abstract}
	We study the liquid democracy problem, where each voter can either directly vote to a candidate or delegate his voting power to a proxy. We consider the implementation of liquid democracy on the blockchain through Ethereum smart contract and to be compatible with the realtime self-tallying property, where the contract itself can record  ballots and update voting status upon receiving each voting massage.  A challenge comes due to the gas fee limitation of Ethereum mainnet, that the number of instruction for processing a voting massage can not exceed a certain amount, which restrict the application scenario with respect to algorithms whose time complexity is linear to the number of voters. We propose a fast algorithm to overcome the challenge, such that i) shifts the on-chain initialization to off-chain and ii) the on-chain complexity for processing each voting massage is $O(\log n)$, where $n$ is the number of voters. 
\end{abstract}

\section{Introduction}
Democracy has always been a widely concerned topic. Voting, as a primary method for realizing democracy in modern society, is more and more common in practice, with applications ranging from the presidential election to community governance \cite{gerber1998primary,kohno2004analysis}. Meanwhile, various issues emerge due to the current voting system, with low participation \cite{birch2016full,kousser2007does} and black-box operation\cite{bannet2004hack,harris2004black}\footnote{According to \cite{harris2004black} Black Box Voting is: ``Any voting system in which the mechanism for recording and/or tabulating the vote is hidden from the voter, and/or the mechanism lacks a tangible record of the vote cast"} to be two of the most significant. Now, a group of technologist are looking for a new approach to reform the voting system, bring all people with voting rights closer to their representatives and holding elections in a public, verifiable way. In other words, voters should exercise their rights in every related voting, e.g. a policy issue or a piece of new legislation, in which the voting status should be publicly displayed and traceable. However, usually people do not have time for full participation, or they are not expert with respect to the area involved by the voting proposal, which resulted in a large number of voting powers not actually being exercised.

The concept liquid democracy (also known as proxy voting, delegate voting) is proposed to handle the participation issue, in which the core idea is that, each voter (also called delegator) can select a personal representative who has the authority to be a proxy (also called delegate) for his vote.  Those delegates can further proxy their votes to other people, creating a directed network graph (called delegate graph). Whenever a delegate votes to a candidate (or a proposal, a policy), by default all his delegators' (including multi-level) voting powers are also cast to that candidate. If the delegator dislikes the way in which the proxy voted, they can either vote themselves, which dilutes the proxy's power, or pick another delegate for the next vote. Applying liquid democracy system significantly reduces time costs of voters and increases voting participation, which has been studied over a long period of time.

The early proposal about liquid democracy can be traced back to 1884 by Lewis Carroll \cite{carroll1884principles}, and followed by a number of economists \cite{william1912primary,miller1969program,paulin2014through}. Nowadays, many companies/parties also implement practical applications of liquid democracy, such as Google votes \cite{hardt2015google}, Pirate Parties (software: liquid feedback) \cite{behrens2014principles}, etc. However, they are still centralized organization, where black-box operation and statistical error are inevitable.

The introduction of blockchain \cite{nakamoto2008bitcoin} and on-chain smart contract technology \cite{wood2014ethereum} satisfies the openess and transparency requirement of voting system. Generally speaking, a blockchain is a decentralized and immutable public ledger ensured by cryptography and P2P networking, storing data including transaction information which is observerble by any user. The smart contract is a pre-designed instruction set for storing and operating on-chain data, led by Ethereum. The source code of the voting system can be deployed on the Ethereum mainnet through smart contract and invoked through on-chain transactions. 
The decentralization of blockchain gurantees that the voting system are impartially executed without any need of trusted third party, eradicating black-box operations.

 A fundamental requirement of on-chain voting systems is realtime self-tallying, which states that, for each incoming voting message,  the contract itself can record ballots\footnote{We uses ballot to denote the number votes that a candidate receives.} and update the voting status (and display it). The self-tallying property skips the trouble to download the whole Ethereum data (and use it to off-chain compute the results, mainly for those who do not participate the voting), or to collect majority’s signatures confirming a specific voting status.\footnote{With the self-tallying property, users can visits the variables of a smart contract by simply send RPC requests to Ethereum full nodes. Otherwise, without the support of any centralized parties like {\em etherscan.io}, users need to download the full Ethereum in order to obtain historical visiting data of the smart contract (current size more than 3.4TB).}

However, the main challenge for realizing on-chain liquid democracy is the limitation of gas fee on Ethereum mainnet, which is remain open\footnote{https://forum.aragon.org/t/open-challenges-for-on-chain-liquid-democracy/161}: executing a single instruction of smart contracts consumes a certain amount of gas, called gas fee, with the average about 10 thousand\footnote{According to https://github.com/ethereum/EIPs/blob/master/EIPS/eip-150.md, one storage\_modify instruction costs 5000 gas, and one storage\_add instruction costs 20000 gas.}.  Whereas Ethereum has a parameter \textbf{block\_gas\_limit}, usually about 10 million, which determines the total gas that can be consumed within a block. That is, the total gas fee for invoking a smart contract can not exceed block\_gas\_limit, because the corresponding transaction can only be included in one block, which means that number of instructions can not exceed 10 million/10 thousand = 1000, otherwise the transaction will be refused. 

 The difficulty lies in the computation of two pieces of critical information upon receiving a voting massage: i) the actual voting power that the voter exercises, which would be reduced since some of his delegators may already cast a direct vote, and ii) the change other candidates' ballots, since the voter's direct vote also reduces the actual voting power of his delegate. Naive algorithms usually compute the two values by traversing through the delegate graph, of which the on-chain time complexity is $O(n)$ for processing each voting message, where $n$ is the number of voters. Especially, when the delegate graph is chain-like, they are undesirable due to the limitation of gas fee, since application scenarios are limited to less than one thousand voters (usually votings with millions of voters are required)

Some discussions try to solve the challenge by add restrictions to the delegate graph, i.e., only allow the delegate graph with the max-depth less than 100. This modification essentially limits voters' behavior, which is lack of user friendliness and does not catch the core of democracy. Moreover, this method is vulnerable under attack: suppose Bob wants to prevent Alice from delegating to anyone, he can create 98 accounts and form a delegate chain, with the top node of the chain delegating to Alice. (Creating new accounts is zero-cost in Ethereum) Anytime Alice delegates to a node, a 100-depth chain is generated, thus the delegation will be refused. Other solutions abound but all unreasonable when meet with the gas fee limitation. 

In this paper we propose an algorithm that reduces the on-chain time complexity to $O(\log n)$ for processing each voting information, which essentially solves the on-chain liquid democracy problem. Our algorithm does not add any restriction to voters: any voters can delegate arbitrary. Our algorithm's off-chain time complexity is also acceptable, only $O(n)$. 

Our algorithm mainly depends on two techniques, the Merkel tree, an on-chain storage method, and the interval tree, a data structure. Our algorithm solves the liquid democracy  problem with the following aspect: 
\begin{itemize}
\item At the beginning of a voting, each voter obtains the delegate graph by snapshotting the current height of Ethereum, then executes a $O(n)$ off-chain initialization to get his initialization data.  
\item  Each voter is not allowed to change his delegate within the period of a voting, but he can directly vote to a candidate by send a voting massage, attached with his initiation data. The Merkel tree method checks the correctness of the initiation data with $O(\log n)$ time complexity. 
\item Upon receiving a voting message, our algorithm requires $O(\log n)$ time complexity for updating/displaying the voting status and storage, through the interval tree structure. 
\end{itemize}

Our on-chain algorithm excluding the Merkel tree part also enhances the off-chain liquid democracy: if each voter votes once, then the time complexity is $O(n^2)$ for traversal algorithms, while our algorithm is $O(n\log n)$. We focus on the on-chain situation in the following of this paper. 

\subsection{Related Work}
The {\em Liquid Democracy Journal}\footnote{https://liquid-democracy-journal.org/} collects many valuable literature about the liquid democracy problem, which begins at 2014 and almost information about latest progress can be found there. Blum and Zuber\cite{blum2016liquid} give an overview liquid democracy, include the concepts, the history and problems. Recently, a few technical papers also are interested in liquid democracy. Anson et al. \cite{kahng2018liquid} analyze the problem that whether there exists a delegate voting that outperforms direct voting, for the situation where there are a correct candidate and an incorrect candidate. Brill and Talmon \cite{brill2018pairwise} study the case where a voter can delegate to several proxies and specify a partial order. They propose a way to overcome the complications of individual rational. Christoff and Grossi \cite{christoff2017binary} analyze liquid democracy within the theory of binary aggregation, and consider the issues of individual rational and delegate cycle. 

 Recently, a series of literature studies the implementation of on-chain voting system \cite{hanifatunnisa2017blockchain,hjalmarsson2018blockchain}, some of which also refer to liquid democracy but not consider the self-tally requirement. The introduction of self-tally can be found in \cite{kiayias2002self}, which states that the property of self-tally and perfect ballot secrecy can not be satisfied simultaneously. Thus in this paper the privacy is compromised in favor of realtime self-tallying. Yang et al.\cite{yang2018decentralized} introduce a self-tallying voting system by Ethereum smart contract, but do not consider the liquid democracy scenario. McCorry et al. \cite{mccorry2017smart} also implement a distributed and self-tally electronic voting scheme using the Ethernet blockchain, while the core is to maximize the protection of voter privacy. 

To the best of our knowledge, our paper is the first $O(\log n)$ algorithm solving the on-chain liquid democracy problem, while the algorithms in Google vote and liquid feedback work in following ways: Google vote's algorithm mainly bases on the work of Schulze's \cite{schulze2011new}, which is a $m^3$ method for electing a winner, where $m$ is the number of candidates. They also demonstrate that the system can implement liquid democracy on a social network in a scalable manner with a gradual learning curve. The basics of Liquid Feedback's algorithm comes from Harmonic Weight\footnote{https://en.wikipedia.org/wiki/Harmonic\_mean}, Proportional Runoff\footnote{http://www.magnetkern.de/prop-runoff/prop-runoff.html} and Schulze method, whose proposes are to determine the weights of candidates. Though both  algorithms can be applied to liquid democracy, the self-tallying requirement and gas fee limitation are not taken into consideration.
\section{Preliminary}
\subsection{Problem Description}
Suppose there are $n$ voters, indexed by numbers $1,2,...,n$, and $m$ candidates, indexed by capital letters $A,B,C,...$. We separate the liquid democracy  problem into three periods:
\begin{itemize}
	\item \textbf{Spare period} During spare period, no voting  is hold. Each voter can arbitrarily delegate, undelegate and change delegate, by sending a massage (transaction) to the blockchain, which is stored in the delegate contract. Each voter is allowed to appoint at most one delegate. 
	\item \textbf{Prepare period}
	In prepare period, a specific voting is to be hold. The holder needs to deploy the voting contract and the delegate graph and each voters voting powers need to be constructed. In the following example we regard the delegate graph as input,  while in the next section we will show how the voting powers are determined and how to handle the case where there is a cycle in the delegate graph. 
	\item \textbf{Voting period} After the voting begins, each voter can directly vote to a candidate by sending a voting message, with all his delegators' voting powers also cast to that candidate (which may also reduce the vote of the candidate that his delegate votes). The on-chain voting status are updated for each voting message and need to be displayed. For convenience, we assume that each voter can only vote once during each voting activity, while our algorithm also fits for the case where each voter can change his vote. A voter's delegate is not allowed to change during voting period.
\end{itemize}
It is notable that, although our algorithm does not allow voters to change their delegates during the voting period, they can accomplish the same thing by casting a direct vote.  Changing delegate to a voter that has voted is equivalent to casting a direct vote and changing delegate to a voter that has not voted is rare in practice. Actually, a voter wants to change his delegate during the voting period usually because  he dislikes the way his delegate votes, which means that he has a better candidate in mind and is better off to vote by himself, and he has the time to do so.

The following example abstracts how voting status changes upon receiving voting massages during voting  period.
 Suppose a (direct, no-cycle) delegate graph $G=(V,E)$ is given, where each node represents a voter, and a direct edge $(u,v)$ represents that voter $v$ delegates to voter $u$. We will use terms ``voter" and ``node" interchangeably in the following of this paper.  Since by assumption there is no
cycle in $G$, thus $G$ is a forest (multiple trees). For convenience, we add a virtual node (indexed 0) that
is pointed by the root of each connected branch. So $G$ is transferred to tree $T$, as
Figure~\ref{fig:1}. That is, there are 12 voters. We further assume that each voter's voting power equals to his index. 

\begin{itemize}
\item At the beginning, nobody
votes. When voter 1 votes for candidate $A$ (as the first voter), $A$ obtains
$1+2+...+12=78$ votes.

\item After voter 1 votes, suppose voter 5 (as the second
voter) votes for candidate $B$. Then $B$ obtains $6+5=11$ votes. $A$'s vote
decreases by 11, turning into 67.

\item Further, voter 3 (as the third voter) votes for candidate $C$, then $C$
obtains $3+4+7+8=22$ votes, $A$'s vote becomes 45, and $B$'s vote is still 11.
\end{itemize}

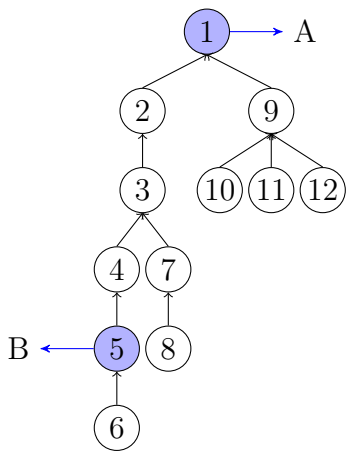
\begin{figure*}
  \centering
  \begin{tikzpicture}
\Tree
[.\node[fsn](n1){1};
  [.\node[sn]{2};
   [.\node[sn]{3};
    [.\node[sn]{4};
     [.\node[fsn](n5){5};
      [.\node[sn]{6};]]]
    [.\node[sn]{7};
     [.\node[sn](n8){8};]]] ]
  [.\node[sn]{9};
   [.\node[sn]{10};]
   [.\node[sn]{11};]
   [.\node[sn]{12};]]
]
\node (a) at ($(n1.east) + (1, 0)$) {A};
\node (b) at ($(n5.west) + (-1, 0)$) {B};
\draw [->, >=stealth', color=blue] (n1) -- (a);
\draw [->, >=stealth', color=blue] (n5) -- (b);

\end{tikzpicture}
  	\caption{Tree $T$. We ignore the virtual node with index 0 here.}
  		\label{fig:1}
\end{figure*}
\textbf{Goal}:
For each voting massage, display the votes of all candidates. (Suppose $m<100$)

\begin{tabular}{|c|c|}
input & output \\
1 A			&		A 78 B 0 C 0
\\
5 B			&		A 67 B 11 C 0
\\
3 C			&		A 45 B 11 C 22
\end{tabular}
\subsection{Blockchain and Smart Contract}
The smart contract of Ethereum supports Turing-complete programming language, which is deployed on the blockchain \cite{bocek2018smart}. Users invoke a smart contract by sending a transaction to the smart contract's address, which contains additional information including the gas fee of the transaction and other incoming parameters, which would further be included in a block. As shown in section 1, the gas fee of a valid transaction are limited by the fixed parameter block\_gas\_limit, so that the number of instructions of the smart contract are also bounded. That is the so-called "on-chain" time complexity. In this paper, we use ``voting massage" to represent the transaction that invokes the voting contract, whose on-chain time complexity are required to be sub-linear to the number of voters.  Ethereum clients obtain latest on-chain data through P2P network and implement smart contracts locally through the Etheruem virtual machine (EVM). Thus, the space limitation of a smart contract only depends on Ethereum nodes' (clients) local storage, which is not an issue in this paper.

It is important to distinguish on-chain smart contract and (open source) cloud computation. The later is still realized by centralized servers, which can not guarantee the codes are correctly executed, while in Ethereum, there is no "center" for executing smart contract: they are executed by every Ethereum node, which is reliable as long as the majority of nodes are honest.

\subsection{Merkel Tree}
The Merkel Tree is a common used method for store and verifying on-chain data. One of the key tools is the hash function, $h()$, which is (cryptographically) hard to find collisions and for inverse computation (In Ethereum, SHA3-256 is used). The Merkel tree is a full binary tree, where each leaf node stores the hash value of data to be stored (see Figure \ref{fig:merkel}). The value of an intermediate node is the hash value of the combination of its two children. 
\begin{figure*}
	\centering
	\label{fig:merkel}
	\includegraphics[width=1.1\textwidth]{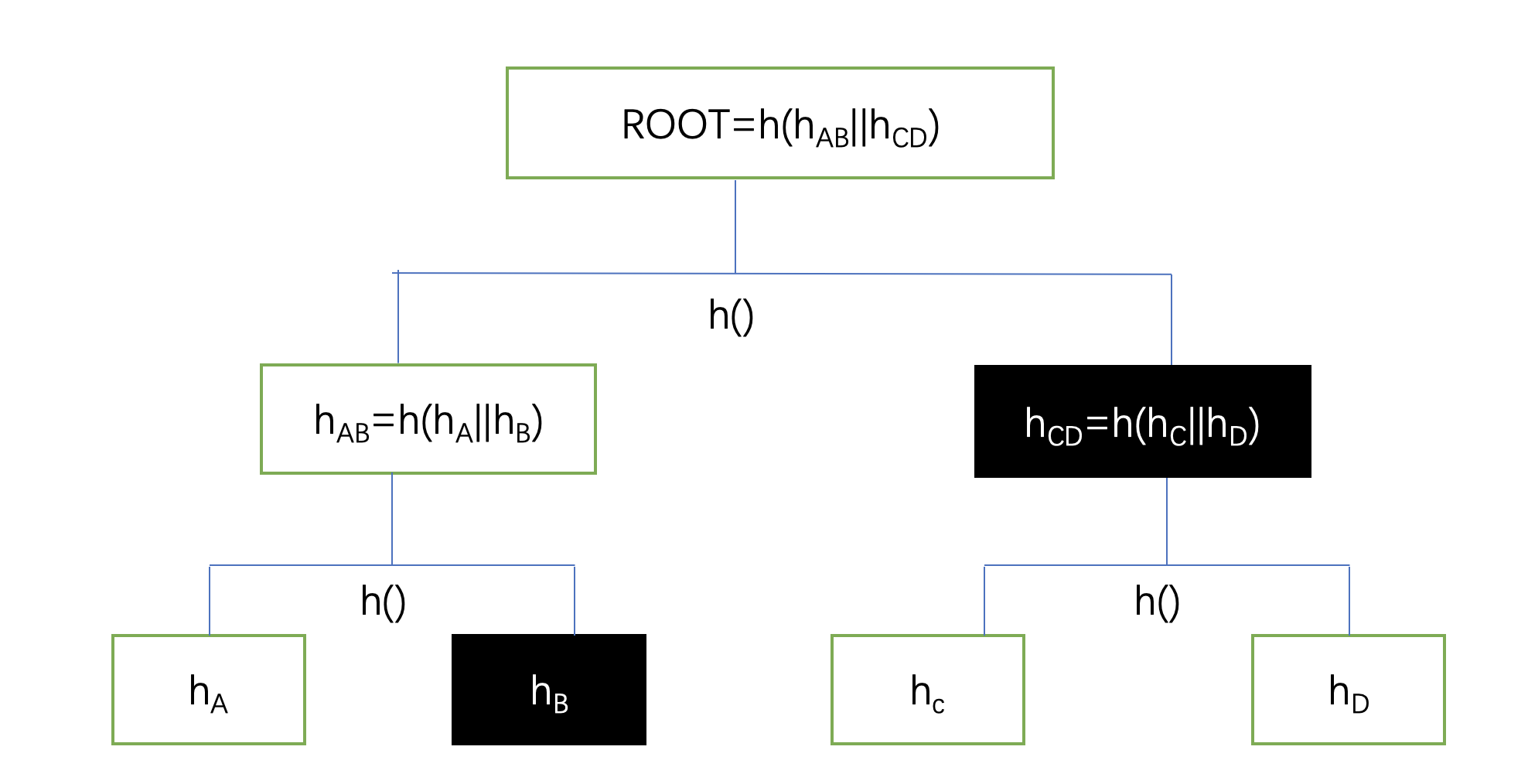}
	\caption{Merkel tree, where $h_{A/B/C/D}$ refers to the hash value of data $A/B/C/D$. The black nodes represent the Merkel path of data $A$.}
\end{figure*}

The use of Merkel tree is that, the blockchain only need to store the root of the Merkel tree. In order to proof that a data (say data $A$ in the Figure \ref{fig:merkel})belongs to the Merkel tree, $A$ along with its Merkel path (also called Merkel Proof) are required:

\textbf{Merkel path}, which is defined to be a sequence of nodes in the Merkel tree that corresponds to brother of each node on path from  $A$'s leaf node to the root. For example, data $A$'s Merkel path is $(h_B,h_{CD})$. A leaf node together with its correct Merkel path can recover the root of the Merkel tree (called Merkel root). The length of the Merkel path and the time complexity for recovering the Merkel root are all logarithmic to the number of leaf nodes of the Merkel tree. The one-wayness of the hash function guarantees that it is hard to construct a correct Merkel path for any data that does not belong to any leaf nodes of the Merkel tree.
\subsection{Interval Tree}
Interval tree is also a binary tree, where each node represents a interval and the interval of a parent node is uniformly distributed to its two child nodes, until the interval becomes a singular, to be a leaf node. 
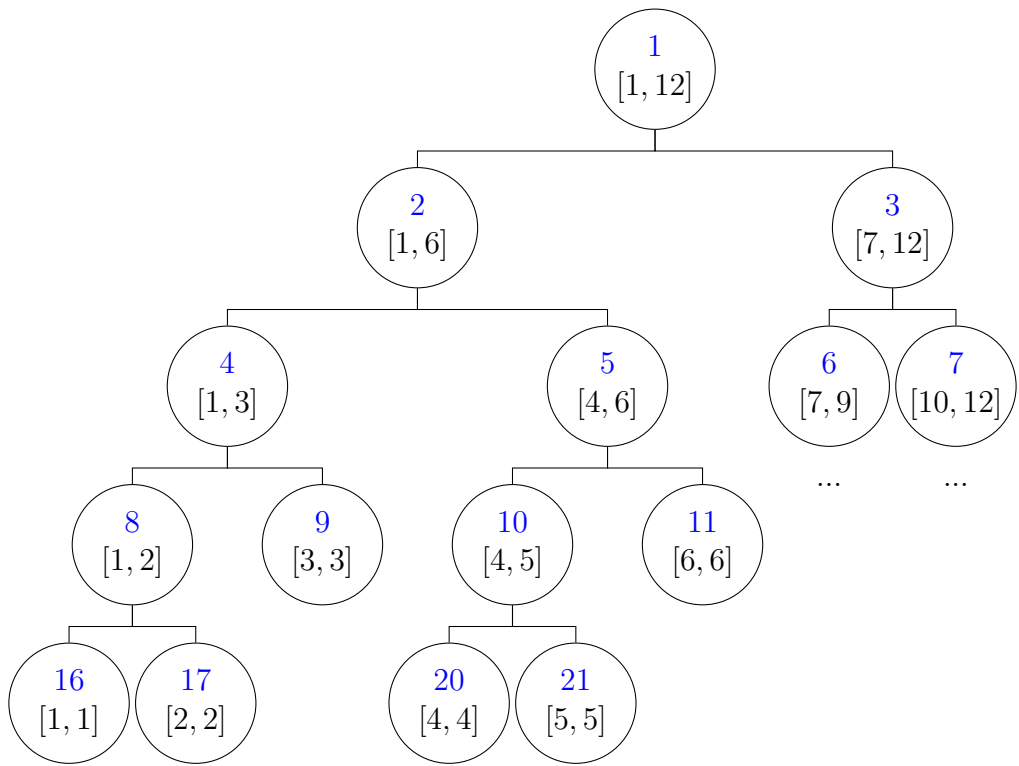
\begin{figure}
	\centering
	\begin{tikzpicture}[level distance=60pt]
\tikzset{edge from parent/.style=
{draw,
edge from parent path={(\tikzparentnode.south)
-- +(0,-8pt)
-| (\tikzchildnode)}}}
\Tree
[.\node[kn](n1){{\color{blue}1}\\$[1,12]$};
  [.\node[kn]{{\color{blue}2}\\$[1,6]$};
   [.\node[kn]{{\color{blue}4}\\$[1,3]$};
    [.\node[kn]{{\color{blue}8}\\$[1,2]$};
     [.\node[kn]{{\color{blue}16}\\$[1,1]$};]
     [.\node[kn]{{\color{blue}17}\\$[2,2]$};]
    ]
    [.\node[kn]{{\color{blue}9}\\$[3,3]$};]
   ]
   [.\node[kn]{{\color{blue}5}\\$[4,6]$};
    [.\node[kn]{{\color{blue}10}\\$[4,5]$};
     [.\node[kn]{{\color{blue}20}\\$[4,4]$};]
     [.\node[kn]{{\color{blue}21}\\$[5,5]$};]
    ]
    [.\node[kn]{{\color{blue}11}\\$[6,6]$};]
   ]
  ]
  [.\node[kn]{{\color{blue}3}\\$[7,12]$};
   [.\node[kn](n6){{\color{blue}6}\\$[7,9]$};]
   [.\node[kn](n7){{\color{blue}7}\\$[10,12]$};]]
]

\node (a) at ($(n6.south) + (0, -0.5)$) {...};
\node (b) at ($(n7.south) + (0, -0.5)$) {...};
\end{tikzpicture}
	\caption{The interval tree of all nodes}
				\label{fig:interval}
\end{figure}
See Figure \ref{fig:interval} for the interval tree of the 12 nodes in Figure \ref{fig:1}. The root are indexed 1 and for node $k$, its child nodes are indexed $2k,2k+1$ respectively. Note that, although some indexes, e.g. 18,19 in Figure \ref{fig:merkel}, does not exist, the space is still used. 

Interval tree is usually used for maintaining an array where the operations are aiming at a successive interval. Given the interval to be updated, the execution begins at the root node, then the interval are separated to one or two sub-intervals. which are recursively executed at the child notes, guaranteeing that the sub-intervals to be update belongs to interval of current node. The recursion ends when the interval to be update equals to the interval of current node. For example, when interval 
[4,8] is to be update, beginning at the root, it separates to [4,6] and [7,8], which are executed at note 2 and 3 respectively. The recursion ends at node 5 and node 12 (which are omitted in Figure \ref{fig:interval}).

Interval tree supports find and update operation. For update operation, usually not every leaf node is updated since the update information may stop at an intermediate node, recorded as lazy-tag, which means that it is temporarily suspended and will be executed in subsequent operations. Find operations need to be executed recursively starting from the root, and trigger the pass-down operation of all lazy-tags until the leaf node is reached. The complexity is $O(\log n)$ with respect to interval tree for both operations. We refer \cite{mulmuley1994computational}, Chapter 10.1 to readers for more detail.

\section{Algorithm}
\subsection{Overview}
In order to real-time display the voting status, the core of our algorithm is to record and maintain each node's ``lost voting power", initially valued zero. When a voter votes, his actual voting power is his total voting power minus his lost voting power. Meanwhile, some other voters' lost voting power should be updated after he votes. As long as each voter's ``lost voting power" can be updated within $O(\log n)$, the liquid democracy problem can be solved. 

See again the example in Figure \ref{fig:1}, 
\begin{itemize}
	\item After voter 1 votes for candidate $A$, all other voters' lost voting powers do not change. 
    \item After voter 5 votes for candidate $B$, $B$'s votes actually increase by $11-0=11$ (voter 5's total voting power minus lost voting power). Meanwhile, voter 2,3,4's lost voting powers all increase by 11. Lost voting powers of other voters {\em{that have not voted yet}} do not change  (lost voting powers for voters that  have voted is meaningless since each voter can only vote once).
    \item After voter 3 votes for candidate $C$, $C$'s votes actually increase by $33-11=22$ (voter 3's total voting power minus lost voting power). Meanwhile, voter 2's lost voting power increase by 22. Lost voting powers of other voters {\em{that have not voted yet}} do not change. 
\end{itemize}
It is not hard to get the following observation:
\begin{observation}
\label{obs:1}
	When a voter votes, a node's lost voting power needs to be updated if and only if the node lines on the path  from the voter to the voter's nearest parent node that has voted (we call it nearest voted parent for short, and voter 0 is regarded as voted. Note that in a tree, there is one and only one path from a node to one of his parent). 
\end{observation}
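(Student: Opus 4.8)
The plan is to prove both directions of the biconditional, leveraging the precise definition of how lost voting power propagates when a voter casts a direct vote. First I would fix notation: let $v$ be the voter currently casting a direct vote, and let $p(v)$ denote $v$'s nearest voted parent (which always exists, since voter $0$ is regarded as voted and is an ancestor of every node). Let $P$ be the unique path in the tree from $v$ up to $p(v)$. The key semantic fact I would establish up front is the meaning of a node $u$'s lost voting power: it is exactly the amount of $u$'s total voting power that has been ``siphoned off'' by descendants of $u$ who have already cast direct votes, so that $u$'s \emph{actual} power (what it would contribute if it voted now) equals its total power minus this quantity.

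For the forward direction (a node's lost voting power changes only if it lies on $P$), I would argue contrapositively. Take any node $u$ not on $P$ and show its lost voting power is unaffected by $v$'s vote. When $v$ votes, $v$'s actual power is withdrawn from whichever candidate $v$'s nearest voted parent $p(v)$ had implicitly assigned it to; the nodes whose lost power should increase are precisely those that were previously counting $v$'s actual power as part of their own contribution. I would show these are exactly the proper ancestors of $v$ that lie strictly below $p(v)$ — that is, the interior nodes of $P$ — because an ancestor at or above $p(v)$ was already not counting $v$'s power (it was being counted through $p(v)$, which has itself already voted and thus been ``cut off''), while any node not an ancestor of $v$ never counted $v$'s power at all. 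Hence $u\notin P$ implies no change, which is the forward direction.

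For the reverse direction (every node on $P$, excluding the endpoints as appropriate, does change), I would verify that each interior node $w$ of $P$ genuinely had $v$'s actual power flowing through it before the vote and loses it afterward, so its lost voting power strictly increases by $v$'s actual power. This uses that no node strictly between $v$ and $p(v)$ has voted (by definition of \emph{nearest} voted parent), so the delegation chain from $v$ up to $w$ was intact and $w$ was indeed receiving $v$'s contribution. I would cross-check the claim against the worked example in Figure~\ref{fig:1}: when voter $5$ votes, the nearest voted parent is voter $1$, the path consists of $5,4,3,2,1$, and indeed voters $2,3,4$ (the interior nodes) have their lost power increased by $11$; when voter $3$ votes, the nearest voted parent is again $1$ but voter $3$ now sits below the already-voted voter $5$'s branch, and only voter $2$ updates — matching the single interior node on the relevant segment.

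The main obstacle I anticipate is pinning down the exact endpoint convention so that the statement is literally an ``if and only if'': the nearest voted parent $p(v)$ itself sits on $P$ but its lost power should \emph{not} change (it has already voted, and its lost power is by the paper's own remark ``meaningless''), while the treatment of the endpoint $v$ is similarly special. I would resolve this by being careful that the set of nodes actually updated is the set of proper ancestors of $v$ lying strictly below $p(v)$, and then reconcile this with the statement's phrasing ``lies on the path from the voter to the nearest voted parent'' by observing that lost power is only tracked for not-yet-voted nodes, so the voted endpoints are excluded automatically. Getting this boundary bookkeeping exactly right, rather than the core propagation argument, is where the care is needed.
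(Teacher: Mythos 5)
Your proposal is correct: the paper offers no formal proof of this observation (it simply asserts it as ``not hard to get'' after the worked example), and your two-direction argument --- tracking whose contribution $v$'s actual power was flowing through before the vote, and using the definition of \emph{nearest} voted parent to show the chain up to $p(v)$ is intact while nodes at or above $p(v)$ already had that power excluded --- is exactly the reasoning the paper's example is meant to convey, including the correct handling of the endpoint convention. (One cosmetic slip: in your cross-check, voter $3$ does not sit below voter $5$'s branch --- voter $5$ is a descendant of voter $3$; that fact affects the amount $t=22$, not the set of updated nodes, and your conclusion that only voter $2$ updates is right.)
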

In the example, when voter 5 votes, the nearest voted parent is 1, so all nodes on the path $(5)\rightarrow4\rightarrow3\rightarrow2\rightarrow(1)$ are to be updated. When voter 3 votes, nodes on the path $(3)\rightarrow2\rightarrow(1)$ are to be updated. 

So the main two goals of our algorithm are:
\begin{enumerate}
	\item Find the voter's nearest voted parent.
	\item Update lost voting powers of nodes on the path from the voter to the voter's nearest voted parent.
\end{enumerate}
However, both 1 and 2 need to traverse the graph in traditional method, whose time complexity is $O(n)$ when the depth of the graph is high (say, chain like). Our solution is to use the interval tree to achieve $O(\log n)$, and use Merkel tree to for initialization.

The following two sequences are used, the preorder sequence and the bracket sequence. 
\begin{itemize}
	\item {\em Preorder sequence}, that is, traverse the tree in the order of $root\rightarrow leaf$ and record the nodes. In Figure \ref{fig:1}, the preorder sequence is 1 2 3 ... 12, that is, each node's index equals to its index of the preorder sequence. 
	\item {\em Bracket sequence}, that is, still traverse the tree in preorder but record the nodes when entering and exiting it respectively.  In Figure \ref{fig:1}, the bracket sequence is 1 2 3 4 5 6 6 5 4 7 8 8 7 3 2 9 10 10 11 11 12 12 9 1.
\end{itemize} 

Our algorithm consists of the following three parts.
\begin{itemize}
	\item In prepare period, each voter locally do an initialization of all voters' data, including their total voting powers, index of the preorder sequence, index of the bracket sequence and so on. The data are submit to the voting contract together with voters' voting massages, and are checked by the Merkel root (Section \ref{sec:step1}).
	\item For each voting massage, find the voter's nearest voted parent, based on the interval tree with respect to the preorder sequence (\ref{sec:step2}).
	\item For each voting  massage, update lost voting powers of nodes on the path from the voter to the voter's nearest voted parent, based on the interval tree with respect to the bracket sequence (\ref{sec:step3}).
\end{itemize}
\subsection{Notations}
We have the following variables:
\begin{itemize}
	\item $T$: The delegate tree, which is generated and stored off-chain. 
	\item $n$: Number of nodes, as well as the length of the preorder sequence (called preorder index for short). 
	\item $n_0$: Length of the bracket sequence. 
	\item $node$: A type, representing the voters. 
	\item $node.stake$: Node's voting power, which is given from the snapshot.
	\item $node.index$: Index of the node in the pre-order sequence.
	\item $node.address$: The Ethereum address of the node, which is an inherent  
	parameter.
	\item $b[]$: Mapping from a node's preorder index to the node.
	\item $nearestparent[]$: Mapping from a node's preorder index to its nearest voted parent's preorder index.
	\item $s[]$:  The score of the bracket sequence (showed in the following).
	\item $node.endpoint$: The maximum preorder index among the node's children (include multi-level).
	\item $node.left$: The first index where the node appears in the bracket sequence. 
	\item $node.right$: The second index where the node appears in the bracket sequence. 
	\item $node.power$: Node's total voting power (including its children's).
	\item $node.candidate$: The candidate that the voter votes.
	\item $v[]$: Recording the votes of candidates.
	\item $lazy_1[]$: Lazy-tag of the interval tree with respect to the preorder sequence, which also reflects the index of the nearest voted parent.
	\item $lazy_2[]$: Lazy-tag of the interval tree with respect to the bracket sequence, which also reflects the ``score" of the sequence.
\end{itemize}
All variables are global and initially valued 0 unless otherwise stated.  For other intermediate variables we will illustrate in the following subsection.
\subsection{Spare and Prepare Period}
\label{sec:step1}
\textbf{Spare Period}

A perpetual smart contract, called the {\em delegate contract}, are established. It has two methods:
\begin{itemize}
	\item $Delegate()$, voters call this method to appoint a delegate, or to undelegate (by delegate to an empty address). Before that, we recommend a protocol that each voter locally downloads the blockchain data and checks whether his delegate operation generates a cycle in the delegate graph. If so, the voter should change his delegate. The protocol can be integrated in the client. 
	\item $Vote()$, a holder calls this method to start a voting, and deploys a new smart contract, called the {\em voting contract}, which will be introduced in the following. After that, the prepare period begins. 
\end{itemize}

\textbf{Prepare Period}

a) At the beginning of the prepare period, all on-chain information are snapshotted by the current height of the blockchain, mainly, each voter's delegate and stake. Then, each voter involved in the voting locally constructs the delegate tree $T$ and gets all voters' voting powers according to the following rule. 
\begin{itemize}
	\item For each voter, get his last delegate operation from the snapshot and add the corresponding direct edge to the delegate graph. Then for all edges that are on a cycle, delete the latest edge. Then repeat the deletion until the delegate graph has no cycle\footnote{All transactions in Ethereum are attached with a time stamp. The time order on the blockchain is define that, if a transaction's block height is larger than another trasaction, then the former is later than the latter. If two transaction have the same block height, the transaction with the larger timestamp are later. The rule of Etherum guarantees that the timestamp can not be forged too far from the actual time otherwise the transaction is infeasible}. After that, add an edge from each zero-outdegree node to the virtual node, resulting in the delegate tree $T$. Since the rule is deterministic, all voters obtain a same delegate tree. 
	We show in next section  that the construction rule is incentive compatible for voters, 
	\item  Usually in a decentralized authority organization (DAO), a voter's voting power equals to one of his stake on the blockchain (say, one kind of ERC-20 token). So all voters' voting powers can be obtained from the snapshot ($node.stake$ in the notation). There are other off-chain method in practice to distribute voting powers, which is not critical of our paper as long as all voters can reach an agreement.
\end{itemize}

b) After the construction of $T$, we use $T.root$ to denote the root of $T$, i.e., the virtual node. Then, each voter locally call $Preorder(child)$, to obtain initialization data. ($n$ and $n_0$ is initialed $-1$ to exclude the virtual node).

\begin{algorithm}
	\label{alg:preorder}
	\textbf{Procedure} $Preorder(Node~root)$;
	\hrule
	$n \leftarrow n+1$\;
	$n_0 \leftarrow n_0+1$\;
	$root.left \leftarrow n_0$\;
	$root.index \leftarrow n$\;
	 $root.power \leftarrow root.stake$\;
	\For{$node$ in $root$'s direct child}
	{
		$Preorder(node)$
		$root.power \leftarrow root.power+node.power$\;
	}
	$root.endpoint \leftarrow n$\;
	$n_0 \leftarrow n_0+1$\;
	$root.right \leftarrow n_0$\;
\end{algorithm}
Intrinsically, a preorder traversal are executed, and all nodes' initialization data are computed simultaneously. 

c) Each voter construct the Merkel root according the initialization data, where the information of each leaf node is {\em hash(node.adddress, node.power, node.index, node.endpoint, node.leftbracket, node.rightbracket)}, representing the initialization data of a voter. The leaf nodes are ordered according to $node.index$ so that each voter's local Merkel tree are identical. The  Merkel root are hard-coded in the voting contract. (If the voting holder makes a mistake of the Merkel root, every voter can choose to ignore the voting contract and remind the holder to deploy a new contract)

After step (c), the prepare period is over and the voting period begins.
\subsection{Voting Period}
In this subsection, we describe the voting contract to process each direct voting massage. 

d) When a voter casts a direct vote, he sends a voting message which contains {\em(data, proof, node.candidate)}, where {\em data=(node.power, node.index, node.endpoint, node.leftbracket, node.rightbracket)} (here the node corresponds to the voter), the initialization data about himself. 

e) Upon receiving a voting massage {\em(data, proof, node.candidate)}, the voting contract first obtains the sender's Ethereum address, to check if it matches with	 $node.address$ in $data$.   If it matches, then the contract recovers a root according to $data$ and $proof$, and checks if the result matches to the Merkel root stored in the contract. If matches. the contract begins to process the voting massage, otherwise returns an ``error" response.

\begin{algorithm}
	\caption{Vote: upon receiving a voting message}
	\KwIn{$node$: voter}
	\KwIn{$data,proof,node.candidate$}
	\hrule
	
	\If {not check(RootHash, proof, data)}{
		\Return ;
	}
	$b[node.index]=node$\;
	$update2(node.left,node.left,1,2n,1,0)${\color{gray}
		//Find the value of node's leftbracket}\;
	$update2(node.right,node.right,1,2n,1,0)${\color{gray}
		//Find the value of node's rightbracket}\;
	$int~t=node.power-s[node.left]+s[node.right]$\;
	$C[node.candidate]+=t$\;
	$update1(node.index,node.index,1,n,1,0)${\color{gray}
		//Find the node's nearest parent}\;
	$Node~parent = b[nearestparent[node.index]]$\;
	$C[parent.candndate]-=t$\;
	$update1(node.index+1,node.endpoint,1,n,1,node.index)${\color{gray}
		//Update nearest voted parents}\;
	$update2(parent.left,node.left,1,2n,1,t)${\color{gray}
		\\//Update scores}\;
	Output $C[]$
	\label{alg:vote}
\end{algorithm}

f) The Algorithm \ref{alg:vote} shows the main procedure for processing a voting massage, which consists of the following instructions: 
\begin{itemize}
	\item Compute the voter's lost voting power.
	\item Define $t$ to be the voter's total voting power minus lost voting power, which represents his actual votes. 
	\item Find the voter's nearest voted parent, and then update other voters' nearest voterd parent (only the voter's children are affected).
	\item The ballot of the candidate that the voter's nearest parent votes decreases by $t$.
	\item Update the lost voting powers of the nodes on the path from the voter to the voter's nearest voted parent.
\end{itemize}
So far, the total procedure of algorithm is produced. In the next two subsections we will introduce the two functions $update1(),update2()$ 
\subsection{Find the Nearest Voted Parent}
\label{sec:step2}
The function $update1()$ achieves the goal of finding the voter's nearest voted parent, by implementing the update operation of the interval tree with respect to the preorder sequence. The lazy-tag of the interval tree's leaf node records the preorder index of corresponding voter's nearest voted parent. 

The following three observations are sufficient for the correctness:
\begin{observation}
\
	\begin{itemize}
	    \item A node's preorder index is always larger than its children's.
	    \item Preoroder indexes of a node's children are successive to the node's preorder index. 
	    \item When a voter votes, only his children's nearest voted parents need to be updated, which should be at least the voter's preorder index.
	\end{itemize}
\end{observation}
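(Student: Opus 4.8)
The plan is to treat all three items as consequences of a single structural fact about the preorder numbering produced by Procedure $Preorder$: the nodes of the subtree rooted at any node $u$ receive exactly the consecutive integers $u.index, u.index+1, \dots, u.endpoint$, with $u$ itself receiving the smallest of them. First I would make the recursion precise, namely that the global counter $n$ is incremented by exactly one each time a node is entered, that $u.index$ is recorded on entry (before any child is processed) and $u.endpoint$ on exit (after every child has been processed), and that $Preorder$ finishes an entire subtree before returning to a sibling, so the indices handed out inside the subtree of $u$ form an uninterrupted run.

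I would then prove the structural fact by induction on the size of the subtree. In the base case a leaf $u$ has $u.index = u.endpoint$ and subtree $\{u\}$. In the inductive step, when $u$ is entered it takes the next integer $k = u.index$; processing its direct children $c_1,\dots,c_r$ in turn, the induction hypothesis gives each subtree of $c_i$ a consecutive block, and because $n$ is never reset these blocks are laid end to end starting at $k+1$, with the last integer used recorded as $u.endpoint$. Hence the subtree of $u$ is numbered $\{k, k+1, \dots, u.endpoint\}$ with $u$ smallest. Every proper descendant of $u$ therefore has strictly larger index than $u$, so preorder indices increase monotonically from a node toward its leaves (the monotonicity underlying the first item), and the descendants occupy exactly the contiguous interval $[u.index+1, u.endpoint]$, i.e.\ they are ``successive'' to $u.index$, as the second item asserts. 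This is also the identity the algorithm exploits when it calls $update1(node.index+1, node.endpoint, \dots)$.

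For the third item I would first pin down ``nearest voted parent'' as the \emph{deepest} voted ancestor, where the virtual node $0$ counts as voted so the set of voted ancestors is always nonempty. The key sub-lemma is that the ancestors of any fixed node lie on a single path to the root and are therefore totally ordered, and that along this path preorder index increases with depth; consequently, among the voted ancestors of a node, ``deepest'' coincides with ``largest preorder index.'' From here the item follows: when voter $v$ casts a vote, the only nodes that acquire a new voted ancestor are the descendants of $v$, which by the structural fact are exactly the indices in $[v.index+1, v.endpoint]$; for each such descendant the deepest voted ancestor becomes the index-wise maximum of its previous value and $v.index$, so the recorded value can only rise and is in particular at least $v.index$. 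This is precisely the $\max$-semantics of the lazy tag $lazy_1$ used by $update1$, with all other nodes left untouched, matching Observation~\ref{obs:1}.

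The step I expect to be the main obstacle is this third one, specifically justifying that the tree-level operation ``replace each affected node's nearest voted parent by the deeper of the old one and $v$'' is faithfully realized by a single interval update with $\max$ aggregation. The delicate points are (i) verifying that no node outside $[v.index+1, v.endpoint]$ can have $v$ as an ancestor, which needs the contiguity established above together with the fact that $v.index$ is the unique minimum of its subtree, and (ii) checking that $\max$ is the correct combinator, i.e.\ that a descendant already possessing a voted ancestor strictly deeper than $v$ keeps it; this is where the ``index increases with depth along a path'' sub-lemma does the real work and must be stated and proved carefully rather than taken for granted.
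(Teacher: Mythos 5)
Your proposal is correct, but note that the paper offers no proof of this observation at all: it is asserted as self-evident and immediately put to work to justify the call $update1(node.index+1,\allowbreak node.endpoint,\dots)$. What you supply is therefore strictly more than the paper gives. Your induction showing that the subtree rooted at $u$ receives exactly the consecutive block $u.index,\dots,u.endpoint$ with $u$ at the minimum is the right key fact, and your reduction of the third bullet to ``preorder index increases with depth along an ancestor chain, hence deepest voted ancestor equals the voted ancestor of maximal index'' is precisely the justification that the $\max$-semantics of $lazy_1$ requires but that the paper leaves implicit. One discrepancy you should flag explicitly rather than pass over: your structural fact gives every descendant of $u$ a \emph{strictly larger} index than $u$, which contradicts the literal wording of the first bullet (``larger than its children's''). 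That wording is evidently a slip in the paper---it is inconsistent with the second and third bullets, with the example in Figure~\ref{fig:1} where the root has index $1$, and with the interval $[node.index+1, node.endpoint]$ the algorithm updates---and you prove the corrected version, but your parenthetical ``the monotonicity underlying the first item'' silently inverts the inequality without saying so; a careful write-up should state that the first item holds with ``smaller'' in place of ``larger'' (equivalently, a node's index exceeds its parent's) and that this is the form the algorithm actually uses.
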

For a node (voter), indexes from $node.index+1$ to $node.endpoint$ represents the preorder indexes of its children, whose nearest voted parent need to be updated. Since they form a successive interval, the interval tree is applicable. 
\begin{algorithm}
	\textbf{procedure} $update1(int~L,int ~R, int~l, int~r, int~k, int~v)${\color{gray}
		\\//$[L,R]$ are the interval to be updated, $[l,r]$ is the current interval of the interval tree node, $k$ is the index of  interval tree node and $v$ is the value for updating.}
	\hrule
	\eIf {$L=l$ and $R=r$}
	{
		\If {$v>lazy_1[k]$} {$lazy_1[k] \leftarrow v$}
		\If {$L = R$} {$nearestparent[L] \leftarrow lazy_1[k]$}
		{\color{gray}
			//Recursion ends when updating interval equals to current node's interval, and then updating the value of the interval}
	}
	{
		$int~m \leftarrow (l+r)/2$\;
		\If {$lazy_1[2k]<lazy_1[k]$} {$lazy_1[2k] \leftarrow lazy_1[k]$}
		\If {$lazy_1[2k+1]<lazy_1[k]$} {$lazy_1[2k+1] \leftarrow lazy_1[k]${\color{gray}
				//pass down the lazy-tag}}
		\If {$L \leq m$}{$update1(L,\min\{m,R\},l,m,2k,v)$}
		\If {$R>m$}{$update1(\max\{m+1,L\},R,m+1,r,2k+1,v)$}
	}
\end{algorithm}

\begin{itemize}
	\item {\em update1(node.index+1, node.endpoint,1,n,1,node.index)} uses the voter's preorder index to do maximum-value update to (preorder indexes of) all its children's nearest voted parent (if the current value is less than the incoming parameter, then replace). 
	\item {\em update1(node.index,node.index,1,n,1,0)} finds and records the node's nearest voted parent (recorded in $nearestparent[]$). The value used for update is zero since no additional updating is needed (just trigger the pass-down operation of lazy-tags). Note that if there is no voted parent then 0 is record, the default value. 
\end{itemize}

\subsection{Update the Lost Voting Power}
\label{sec:step3}
When a voter votes, all nodes on the path from the voter to its nearest voted parented should update their lost voting powers. See Figure \ref{fig:2}, if voter 8 votes after voter 1 votes, then path $7\rightarrow3\rightarrow2\rightarrow1$ should be updated. However it is not a successive interval in the preorder sequence.
\begin{figure}
  \centering
  \begin{tikzpicture}
\Tree
[.\node[fsn](n1){1};
  [.\node[sn]{2};
   [.\node[sn]{3};
    [.\node[sn]{4};
     [.\node[sn]{5};
      [.\node[sn]{6};]]]
    [.\node[sn]{7};
     [.\node[fsn](n8){8};]]] ]
  [.\node[sn]{9};
   [.\node[sn]{10};]
   [.\node[sn]{11};]
   [.\node[sn]{12};]]
]
\node (a) at ($(n1.east) + (1, 0)$) {A};
\node (b) at ($(n8.east) + (1, 0)$) {B};
\draw [->, >=stealth', color=blue] (n1) -- (a);
\draw [->, >=stealth', color=blue] (n8) -- (b);

\end{tikzpicture}
	\caption{Updating a path}
	\label{fig:2}
\end{figure}
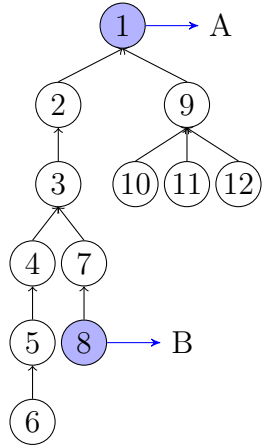

We use the bracket sequence to handle this problem. A bracket sequence is to
record each node twice in the pre-order traversal, one for enter and one for
exit, called left bracket and right bracket respectively. For Figure~\ref{fig:2}, the bracket sequence is
$$1,2,3,4,5,6,6,5,4,7,8,8,7,3,2,9,10,10,11,11,12,12,9,1$$

For a direct path starts from $u$ and ends at $v$, define the path's {\em bracket interval} to be indexed from $v.leftbracket$ to $u.leftbreacket$. The following observation shows the property of the bracket interval:
\begin{observation}
Given a direct path, for any node that does not lie on the path, it occur twice or does not occur in the path's bracket interval. For any node that lies on the path, it occur exactly once in the path's bracket interval. Moreover, only the node’s first appearance lies in the interval.
\end{observation}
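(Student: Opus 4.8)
The plan is to reduce everything to two standard structural facts about the bracket (Euler-tour) sequence and then run a short case analysis on an arbitrary node $w$ relative to the path. Throughout, for a node $x$ write $\ell_x = x.\mathit{left}$ and $r_x = x.\mathit{right}$ for its first and second occurrence (left and right bracket), so the path's bracket interval is $[\ell_u,\ell_v]$ with $u$ the start (the ancestor) and $v$ the end (the descendant). The first foundational fact I would establish, directly from the definition of the pre-order traversal that emits the bracket sequence, is the \emph{subtree--interval property}: for every node $x$ the positions in $[\ell_x,r_x]$ are exactly the left and right brackets of the nodes of the subtree rooted at $x$ and nothing else; in particular $\ell_x \le \ell_y \le r_y \le r_x$ holds precisely when $y$ is a descendant of $x$ (allowing $x=y$). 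The second fact, an immediate corollary, is the \emph{nesting-or-disjoint property}: for any two nodes the bracket interval of one contains that of the other when they are comparable, and the two intervals are disjoint when they are incomparable.

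Next I would note that since the path runs from $u$ down to $v$, $u$ is an ancestor of $v$, so $[\ell_v,r_v]\subseteq[\ell_u,r_u]$ and hence $\ell_u\le \ell_v<r_v\le r_u$; in particular $[\ell_u,\ell_v]\subseteq[\ell_u,r_u]$. By the subtree--interval property this at once disposes of every node $w$ outside $\mathrm{subtree}(u)$: none of its brackets can lie in $[\ell_u,\ell_v]$, so it occurs zero times, consistent with the claim since such a $w$ is off the path. It then remains to treat $w\in\mathrm{subtree}(u)$, which I would split according to the relation between $w$ and $v$ via the nesting-or-disjoint property.

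The case analysis is: (i) if $w$ lies on the path, i.e.\ $w$ is an ancestor of $v$ inside $\mathrm{subtree}(u)$ (endpoints included), then $\ell_u\le\ell_w\le\ell_v$ while $r_w\ge r_v>\ell_v$, so exactly the left bracket $\ell_w$ lies in the interval — one occurrence, and it is the first appearance, as required; (ii) if $w$ is a proper descendant of $v$ then $\ell_w>\ell_v$, so both brackets fall beyond the interval, giving zero; (iii) if $w$ is incomparable to $v$ then by disjointness either $r_w<\ell_v$, in which case both $\ell_w,r_w\in[\ell_u,\ell_v]$ (these are precisely the nodes of the subtrees hanging off the path strictly before $v$ is entered, the ``left-sibling'' subtrees), giving two occurrences, or $\ell_w>r_v>\ell_v$, giving zero. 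Collecting the cases yields the trichotomy: path nodes occur once and only via their first appearance, while every off-path node occurs twice or not at all.

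I do not expect a genuine obstacle once the two structural facts are fixed; the only points demanding care are the endpoint bookkeeping — verifying that the strict inequalities $r_w>\ell_v$ (for path nodes) and $\ell_w>r_v$ (for right-side nodes) genuinely exclude those brackets from the \emph{closed} interval $[\ell_u,\ell_v]$ — and ensuring the incomparable case is split exhaustively into the ``before $v$'' and ``after $v$'' alternatives, since that split is exactly what separates the two-occurrence nodes from the zero-occurrence ones. If one prefers to avoid the appeal to disjointness in case (iii), an equivalent route is to read the Euler-tour segment from $\ell_u$ to $\ell_v$ directly: it consists of the enter-bracket of $u$, followed along the path by, for each step, the complete subtree traversals of the children preceding the next path node and then the enter-bracket of that path node, terminating at $\ell_v$; this decomposition simultaneously exhibits each path node's single left bracket and each side subtree's matched pair of brackets.
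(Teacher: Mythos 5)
Your proof is correct, and it is worth noting that the paper itself does not actually prove this observation: it only illustrates it on the example path from node 8 to node 1 in Figure 2. Your argument supplies the missing rigor. The two structural facts you isolate (the subtree--interval property $\ell_x \le \ell_y \le r_y \le r_x$ iff $y$ is a descendant of $x$, and the resulting nesting-or-disjoint dichotomy) are exactly the right lemmas, and your four-way case analysis on an arbitrary node $w$ --- outside $\mathrm{subtree}(u)$, on the path, below $v$, or incomparable to $v$ and falling before or after it --- is exhaustive and each case lands on the claimed occurrence count; in particular you correctly pin down that a path node contributes only its left bracket because $r_w \ge r_v > \ell_v$ pushes its right bracket past the closed interval. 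One cosmetic caution: the paper's convention is that the path \emph{starts} at the descendant (the voter) and \emph{ends} at the ancestor (the nearest voted parent), with the bracket interval written as $[v.\mathit{left}, u.\mathit{left}]$ in that order, so your $u$ and $v$ are swapped relative to the paper's; since you state your convention explicitly and the interval itself (ancestor's left bracket to descendant's left bracket) is identical, this is harmless, but you should align the labels if this text is to sit next to the paper's definition. Your closing alternative --- reading the Euler-tour segment from $\ell_u$ to $\ell_v$ as the concatenation of path-node enter-brackets interleaved with complete traversals of the side subtrees hanging off the path --- is essentially the intuition the paper's example is gesturing at, so either route is faithful to the intended argument.
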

For example, suppose the path from 8 to 1 is to be updated, its bracket interval is $1,2,3,4,5,6,6,5,4,7,8$ (index 1-11). Nodes 4,5,6,11,12 do not lie on the path, so they occur twice or do not occur in the interval. Nodes 1,2,3,7,8 lie in the path, so they occur once in the interval. 

We then define an array $s[]$ for recording the so-called ``score" of the bracket sequence. Given a path where the nodes' lost voting powers need to add some value, we add that value into the scores of the path's bracket interval. Then for a node's  lost voting power, we can compute it by
 $$s[node.left]-s[node.right]$$
 The reason is that, when we increase the score, only the nodes on the path increase their lost voting powers (only the leftbracket increases). For a node outside the path, the scores either does not change or both its leftbraket and rightbracket increase by the same value, thus the lost voting power does not change.

We construct another interval tree with respect to the bracket sequence  and maintain the score, recorded in the variable $lazy_2[]$ of leaf nodes. The function $update2()$ gives the implementation.
\begin{algorithm}
	\textbf{procedure} $update2(int~L,int ~R, int~l, int~r, int~k, int~v)$
	\hrule
	\eIf {$L=l$ and $R=r$}
	{
       $lazy_2[k] \leftarrow lazy_2[k]+v$\;
       \If {$L=R$}
       {
       	   $s[L]\leftarrow lazy_2[k]$
       }
	}
	{
		$int~m \leftarrow (l+r)/2$\;
		$lazy_2[2k] \leftarrow lazy_2[2k]+lazy_2[k]$\;
		$lazy_2[2k+1]\leftarrow lazy_2[2k+1]+lazy_2[k]$\;
		$lazy_2[k] \leftarrow 0$
			{\color{gray}
		//pass down the lazy-tag}\;
		\If {$L \leq m$}{$update2(L,\min\{m,R\},l,m,2k,v)$}
		\If {$R>m$}{$update2(\max\{m+1,L\},R,m+1,r,2k+1,v)$}
    }
\end{algorithm}
\begin{itemize}
	\item {\em update2(node.left,node.left,1,2n,1,0)} finds the score of the node's leftbracket (recorded in the array $s[]$). Similar for the rightbracket. 
	\item {\em update2(parent.left, node.left,1,2n,1,t)} update the score of the bracket interval of the path from the voter to its nearest parent.
\end{itemize}
So far our algorithm is introduced. In the next section we prove some theorems.
\section{Theorem}
In this section we prove some properties of our algorithm. We first analyze our protocol for constructing the delegate graph.
\begin{lemma}
   If a voter's delegate operation does not generate a cycle of the delegate graph (locally checked), then the corresponding edge will never be deleted. 
\end{lemma}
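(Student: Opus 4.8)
The plan is to exploit the fact that the delegate graph built from the snapshot has in-degree at most one, which severely restricts where cycles can sit, and then to track the relative ``ages'' of edges through the deletion rule to show the protected edge is never the one chosen for deletion.

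First I would record the structural facts about the snapshot graph $G$. Since each voter appoints at most one delegate, each node has at most one incoming edge, so $G$ is the reverse of a functional graph: every weakly connected component contains at most one directed cycle, and all cycles are vertex-disjoint. I would use two consequences repeatedly: deleting edges can never create a new cycle, and any edge lies on at most one cycle. I would then fix the edge of interest $e=(u,v)$ arising from voter $v$'s last delegation, recorded on chain at time $\tau(e)$, and take as the hypothesis that $v$'s local check found no cycle through $e$ in the delegate graph current at time $\tau(e)$.

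Next I would argue by contradiction. The construction rule only ever deletes the latest edge lying on a cycle, so if $e$ is ever deleted then, at that moment, $e$ lies on a cycle $C$ and is the latest edge of $C$. Because cycles are vertex-disjoint and all prior deletions touch only other (disjoint) cycles, $C$ is already a cycle of the snapshot graph $G$ and $e$ is its latest edge there. The core step is to show this is impossible: every other edge $e'$ of $C$ has $\tau(e')<\tau(e)$, and since $e'$ survives into $G$ it is its owner's last delegation as of the snapshot; as $\tau(e')<\tau(e)\le$ (snapshot time) and that owner performs no later delegation, $e'$ was already that owner's current delegation at time $\tau(e)$. Hence the graph $v$ inspected at $\tau(e)$ already contained all of $C\setminus\{e\}$, so adding $e$ would close the cycle $C$, contradicting the premise that the local check found no cycle. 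Therefore $e$ can never be the latest edge on a cycle, and is never deleted.

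The main obstacle is the timing bookkeeping in that core step: making precise that an edge predating $e$ and surviving into the snapshot was genuinely present in the graph at the instant of $v$'s operation. This rests entirely on the ``last delegation wins'' semantics of the construction together with the total chain-time order (block height, then timestamp), so I would state those two ingredients carefully before invoking them. I would also flag one modeling point to pin down first, namely that ``locally checked'' should be read as ``no cycle in the last-delegation graph as of $\tau(e)$,'' since the argument needs the check to reflect the chain state at the time $e$ is recorded rather than at some strictly earlier download.
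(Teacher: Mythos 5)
Your proposal is correct and follows essentially the same route as the paper's proof: assume the edge is deleted, note that the deletion rule only removes the \emph{latest} edge of a cycle, and conclude that the cycle must already have been closed by earlier edges at the moment of the voter's delegation, contradicting the local check. The paper states this in two sentences and elides the timing bookkeeping you work out (that every earlier edge surviving into the snapshot was already present at time $\tau(e)$, via the last-delegation-wins semantics and the chain's total time order); your version makes that implicit step explicit but does not change the argument.
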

\begin{proof}
	Assume by contradiction that the delegate edge is deleted. By definition, there must be a cycle such that the delegate edge is the latest, which means that the cycle are generated by the appearance of the delegate edge, contradiction.
\end{proof}
It means that, if the voter follows the protocol, then his delegation are garanteed to be retained, which is benefit for him. Otherwise if he deviates (his delegation generates a cycle), his delegation may be deleted. (It is also possible to be retained, if other voters further change their delegate and remove the cycle)
\begin{lemma}
	If a voter deviates from our mechanism, by sending a delegate edge that generates a cycle, then further, this edge will not cause other voter's delegate edge to be refused if they follows the protocol.
\end{lemma}
\begin{proof}
	We call the delegate edge of the dishonest voter edge $A$. We prove that, if an edge $B$ is refused with the existence of edge $A$, then it will also be refused without the existence of edge $A$.
	
	Since $B$ is refused, it lies on a cycle which contains $A$. Since $A$ also lies on another cycle, if $A$ is delete, $B$ still lies on a cycle, and should be refused according to the protocol. The lemma is proved.
\end{proof}
The lemma shows that, even if a voter deviates from the protocol, other voters are not influenced. 

There are also sublinear-time algorithms that can judge whether a cycle is generated for a incoming delegate edge, which can be used in smart contract. However it is more complex and require more gas fee for each delegate massage. So still our protocol is recommended in practice. 

Next, we introduce our main theorem:
\begin{theorem}
	For each voting massage in liquid democracy problem, the voting status can be updated and displayed within $O(\log n)$ time complexity. Moreover, our algorithm can be deploy on the Ethereum mainnet and overcome the gas limitation, for the number of voters more than one million.
\end{theorem}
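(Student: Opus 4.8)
The plan is to establish the two assertions in turn: that Algorithm~\ref{alg:vote} correctly updates the displayed voting status, and that it runs in $O(\log n)$ time per message, whence the concrete gas bound for $n>10^{6}$ follows. Because the theorem claims the status \emph{can be updated and displayed}, I would first confirm correctness from the observations already proved, and only then count the work.

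For correctness I would maintain the invariant that, immediately after each message is processed, every not-yet-voted node's lost voting power equals $s[node.left]-s[node.right]$ and the array $nearestparent[\cdot]$ stores its true nearest voted parent. By Observation~\ref{obs:1}, when the current voter votes, exactly the nodes on the path from the voter up to its nearest voted parent change their lost voting power. The range call \emph{update1} over $[node.index+1,\,node.endpoint]$ rewrites the nearest-voted-parent tag for precisely the voter's descendants, which is legitimate because those descendants occupy that contiguous preorder interval (the preorder observations); the point call \emph{update1} at $node.index$ then reads off the voter's own nearest voted parent. The range call \emph{update2} over the path's bracket interval adds $t$ to the scores there, and by the bracket-interval observation this raises the score at exactly the left brackets of the path nodes, so only those nodes' lost voting powers grow by $t$. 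Finally $t=node.power-s[node.left]+s[node.right]$ is the voter's actual undiluted power, added to his candidate and subtracted from his nearest voted parent's candidate, which is exactly the change demanded by Observation~\ref{obs:1} and the worked example.

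For the complexity I would bound each line of Algorithm~\ref{alg:vote}. The Merkel check is $O(\log n)$ by the stated property of Merkel proofs. Each \emph{update1} and \emph{update2} call is a standard lazy-propagating interval-tree operation on a sequence of length $n$ or $2n$: whether a point query or a range update, it descends along only $O(\log n)$ tree nodes, doing a constant number of comparisons, reads, and lazy-tag writes at each. Algorithm~\ref{alg:vote} issues only five such calls together with $O(1)$ further arithmetic and array accesses, and the closing output of $C[]$ costs $O(m)=O(1)$ since $m<100$, so the per-message time is $O(\log n)$.

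The main obstacle, and the step I would treat most carefully, is turning this asymptotic bound into an explicit gas figure below block\_gas\_limit for $n>10^{6}$. Here I would make the hidden constant concrete: at $n=10^{6}$ we have $\log_2(2n)\approx 21$, so each interval-tree call visits at most a small multiple of $21$ nodes, and the dominant cost is the storage writes (lazy tags and $s[\cdot]$) at about $5000$ gas each (up to $20000$ for a first write to a fresh slot). Summing over the five calls and the Merkel-path hashes yields on the order of a few hundred storage operations, hence a few million gas, comfortably under the $\approx 10^{7}$ block\_gas\_limit; since this bound grows like $c\log n$ it remains valid well past one million voters. I would also record that space is not binding, as argued in the preliminaries, so per-message gas is the only resource to check.
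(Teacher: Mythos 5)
Your proposal is correct, and for the first half of the theorem it follows essentially the paper's route: the paper also reduces the claim to "the properties of the tools we used," i.e.\ the $O(\log n)$ Merkel proof check and the $O(\log n)$ interval-tree updates, though you flesh out the correctness side (the invariant that lost voting power equals $s[node.left]-s[node.right]$ and that $nearestparent[\cdot]$ is maintained) considerably more than the paper does. The one piece of argument the paper actually spells out --- why a lazy-propagating range update touches only $O(\log n)$ tree nodes, namely that after the first split at most two intervals are in the recursive state per level and one of them immediately terminates --- is exactly the step you wave through as "standard," so you and the paper are complementary there. The genuine divergence is in the second half: the paper explicitly declines to prove the gas claim analytically ("For the part of Ethereum, we leave the proof in the experiment section") and instead measures gas on Ganache, finding roughly $10^6$ gas at chain length $3000$ against a $6.7\times 10^6$ limit; you instead derive the bound by counting storage writes at $5000$--$20000$ gas each over roughly $5\log_2(2n)$ tree nodes plus the Merkel path. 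Your analytic estimate of "a few million gas" is consistent with (if somewhat more pessimistic than) the measured figures, and it buys an extrapolation to $n>10^6$ that the experiments, which stop at $n=3000$, only suggest by the flatness of the curve; the paper's empirical route buys measured constants that account for EVM overheads your back-of-envelope count could miss. Both are legitimate; a referee would probably want both.
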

The theorem is obvious according to the properties of the tools we used. Here we illustrate some issues.
\begin{itemize}
	\item Processing a voter's voting message does not rely on the initialization data of other voters that has not voted, since our algorithm only requires the data from the nearest voted parent.
	\item The ``mapping" structure in Solidity (the coding language of Ethereum smart contract) satisfies that, the storages are allocated only if they are assigned values. For example, the storage $lazy[3]$ can be allocated without the allocation of $lazy[1]$ and $lazy[2]$. Moreover, $lazy[1]$ and $lazy[2]$ still can be visit but always returned a default value 0, which is just the requirement of our algorithm.
	\item The time complexity of update operation in interval tree is $O(\log n)$, since at each level of the interval tree, at most two intervals are at the recursive state: after the interval to be updated is first time separated to two sub-intervals, there are at least one endpoints of the interval to be update are the same as the endpoints of the interval of current node. So at the next level, either there are only one interval to be executed, or there are two intervals, but one of them is identical to the interval of the next interval tree node, and the recursion ends. 
\end{itemize}
For the part of Ethereum, we leave the proof in the experiment section.
\section{Experiment}
We compare between our algorithm and traversal algorithm (implemented by ourselves) by
recording the maximum consumed gas fee. Since in practice the only requirement is that the  consumed gas fee should be strictly limited by related Ethereum parameter, we focus on extreme cases that produce the maximum consumed gas fee. 

We assume that the delegate graph is a chain, which turns out to be the extreme case of consumed gas fee. We also consider extreme cases where the voter comes from the head/tail of the chain, which is possible to produce the maximum consumed gas fee for both algorithm.

\begin{figure}[h]
\begin{minipage}[t]{0.5\textwidth}
 \begin{tikzpicture}[scale=0.6]
\begin{axis}[votegas]
\addplot  plot coordinates {
(10, 139758)
(20, 210619)
(30, 281487)
(40, 352360)
(50, 423240)
(60, 494126)
(70, 565018)
(80, 635916)
(90, 706821)
(100, 777732)
(200, 1487185)
(300, 2197263)
(400, 2907966)
(500, 3619294)
(600, 4331247)
(700, 5043826)
(800, 5757029)
(900, 6470857)
};\addplot  plot coordinates {
(10, 520250)
(20, 580441)
(30, 559707)
(40, 640301)
(50, 619823)
(60, 619695)
(70, 700354)
(80, 700354)
(90, 700418)
(100, 679812)
(200, 739737)
(300, 820587)
(400, 799789)
(500, 799917)
(600, 880448)
(700, 880384)
(800, 859906)
(900, 859906)
(1000, 859970)
(2000, 919895)
(3000, 1000554)
};\legend{traversal,fast}
  \draw [dashed] ( axis cs:0, 6750000) -- ( axis cs:3200, 6750000) node
  [near start, above] {Gas Limit};
\end{axis}
\end{tikzpicture}
  \caption{Voting by the head.}
  \label{fig:eval:root}
\end{minipage}
\begin{minipage}[t]{0.5\textwidth}
 \begin{tikzpicture}[scale=0.6]
\begin{axis}[votegas]
\addplot  plot coordinates {
(10, 103692)
(20, 134473)
(30, 165254)
(40, 196036)
(50, 226818)
(60, 257601)
(70, 288384)
(80, 319167)
(90, 349951)
(100, 380735)
(200, 688599)
(300, 996502)
(400, 1304444)
(500, 1612425)
(600, 1920444)
(700, 2228503)
(800, 2536601)
(900, 2844739)
(1000, 3152915)
(2000, 6236825)
};\addplot  plot coordinates {
(10, 536968)
(20, 613237)
(30, 595379)
(40, 689505)
(50, 666551)
(60, 671584)
(70, 760741)
(80, 765774)
(90, 763257)
(100, 742819)
(200, 819088)
(300, 913534)
(400, 895548)
(500, 898193)
(600, 989610)
(700, 992255)
(800, 971817)
(900, 966912)
(1000, 974398)
(2000, 1050794)
(3000, 1144856)
};\legend{traversal,fast}
  \draw [dashed] ( axis cs:0, 6750000) -- ( axis cs:3200, 6750000) node
  [near start, above] {Gas Limit};
\end{axis}
\end{tikzpicture}
  \caption{Voting by the tail.}
  \label{fig:eval:tail}
\end{minipage}
\end{figure}

We conduct the evaluation on Ganache, which is a
personal blockchain for Ethereum development that can be used to deploy contracts, and run tests.
Our implementation can be found here~\footnote{\url{https://github.com/freeof123/liquid-voting/tree/master/ether-eval/contracts}}.

Our comparison is from two aspects. \begin{enumerate*}[1) ]%
  \item The voter is the \emph{head} of the delegate chain, as illustrated in
  Fig.~\ref{fig:eval:root}; \item the voter is the \emph{tail} of the delegate chain
, as shown in Fig.~\ref{fig:eval:tail}. \end{enumerate*}.
The gas limit is about 6,700,000 according to Ganache. Our
evaluation shows that \begin{itemize}
    \item the traversal algorithm performs better when the delegate chain
      is short, like smaller than 100;
    \item our algorithm significantly outperforms the traversal algorithm when
      the delegate chain is long enough;
    \item our algorithm can scale up with very limited gas increasing, while
      the traversal algorithm reaches the gas limit when the delegate chain
      grows up to 1,000.
\end{itemize}.

\section*{Acknowledgment}
We thank Ying Liu from Peking University for the advising the algorithm.

\bibliographystyle{plain}
\bibliography{liquid}

\begin{thebibliography}{10}

\bibitem{bannet2004hack}
Jonathan Bannet, David~W Price, Algis Rudys, Justin Singer, and Dan~S Wallach.
\newblock Hack-a-vote: Security issues with electronic voting systems.
\newblock {\em IEEE Security \& Privacy}, 2(1):32--37, 2004.

\bibitem{behrens2014principles}
Jan Behrens, Axel Kistner, Andreas Nitsche, and Bj{\"o}rn Swierczek.
\newblock {\em The principles of LiquidFeedback}.
\newblock Interaktive Demokratie e. V. Berlin, 2014.

\bibitem{birch2016full}
Sarah Birch.
\newblock Full participation: A comparative study of compulsory voting.
\newblock 2016.

\bibitem{blum2016liquid}
Christian Blum and Christina~Isabel Zuber.
\newblock Liquid democracy: Potentials, problems, and perspectives.
\newblock {\em Journal of Political Philosophy}, 24(2):162--182, 2016.

\bibitem{bocek2018smart}
Thomas Bocek and Burkhard Stiller.
\newblock Smart contracts--blockchains in the wings.
\newblock In {\em Digital Marketplaces Unleashed}, pages 169--184. Springer,
  2018.

\bibitem{brill2018pairwise}
Markus Brill and Nimrod Talmon.
\newblock Pairwise liquid democracy.
\newblock In {\em IJCAI}, pages 137--143, 2018.

\bibitem{carroll1884principles}
Lewis Carroll.
\newblock {\em The principles of parliamentary representation}.
\newblock Harrison and Sons, 1884.

\bibitem{christoff2017binary}
Zo{\'e} Christoff and Davide Grossi.
\newblock Binary voting with delegable proxy: An analysis of liquid democracy.
\newblock {\em arXiv preprint arXiv:1707.08741}, 2017.

\bibitem{gerber1998primary}
Elisabeth~R Gerber and Rebecca~B Morton.
\newblock Primary election systems and representation.
\newblock {\em Journal of Law, Economics, \& Organization}, pages 304--324,
  1998.

\bibitem{hanifatunnisa2017blockchain}
Rifa Hanifatunnisa and Budi Rahardjo.
\newblock Blockchain based e-voting recording system design.
\newblock In {\em 2017 11th International Conference on Telecommunication
  Systems Services and Applications (TSSA)}, pages 1--6. IEEE, 2017.

\bibitem{hardt2015google}
Steve Hardt and Lia~CR Lopes.
\newblock Google votes: a liquid democracy experiment on a corporate social
  network.
\newblock 2015.

\bibitem{harris2004black}
Bev Harris and David Allen.
\newblock {\em Black box voting: Ballot tampering in the 21st century}.
\newblock Talion Pub., 2004.

\bibitem{hjalmarsson2018blockchain}
Friorik~P Hjalmarsson, Gunnlaugur~K Hreiarsson, Mohammad Hamdaqa, and Gisli
  Hjalmtysson.
\newblock Blockchain-based e-voting system.
\newblock In {\em 2018 IEEE 11th International Conference on Cloud Computing
  (CLOUD)}, pages 983--986. IEEE, 2018.

\bibitem{kahng2018liquid}
Anson Kahng, Simon Mackenzie, and Ariel~D Procaccia.
\newblock Liquid democracy: An algorithmic perspective.
\newblock In {\em Thirty-Second AAAI Conference on Artificial Intelligence},
  2018.

\bibitem{kiayias2002self}
Aggelos Kiayias and Moti Yung.
\newblock Self-tallying elections and perfect ballot secrecy.
\newblock In {\em International Workshop on Public Key Cryptography}, pages
  141--158. Springer, 2002.

\bibitem{kohno2004analysis}
Tadayoshi Kohno, Adam Stubblefield, Aviel~D Rubin, and Dan~S Wallach.
\newblock Analysis of an electronic voting system.
\newblock In {\em IEEE Symposium on Security and Privacy, 2004. Proceedings.
  2004}, pages 27--40. IEEE, 2004.

\bibitem{kousser2007does}
Thad Kousser and Megan Mullin.
\newblock Does voting by mail increase participation? using matching to analyze
  a natural experiment.
\newblock {\em Political Analysis}, 15(4):428--445, 2007.

\bibitem{mccorry2017smart}
Patrick McCorry, Siamak~F Shahandashti, and Feng Hao.
\newblock A smart contract for boardroom voting with maximum voter privacy.
\newblock In {\em International Conference on Financial Cryptography and Data
  Security}, pages 357--375. Springer, 2017.

\bibitem{miller1969program}
James~C Miller.
\newblock A program for direct and proxy voting in the legislative process.
\newblock {\em Public choice}, 7(1):107--113, 1969.

\bibitem{mulmuley1994computational}
Ketan Mulmuley.
\newblock Computational geometry.
\newblock {\em An Introduction Through Randomized Algorithms. Prentice-Hall},
  1994.

\bibitem{nakamoto2008bitcoin}
Satoshi Nakamoto et~al.
\newblock Bitcoin: A peer-to-peer electronic cash system.
\newblock 2008.

\bibitem{paulin2014through}
Alois Paulin.
\newblock Through liquid democracy to sustainable non-bureaucratic government.
\newblock In {\em Proc. Int. Conf. for E-Democracy and Open Government}, pages
  205--217, 2014.

\bibitem{schulze2011new}
Markus Schulze.
\newblock A new monotonic, clone-independent, reversal symmetric, and
  condorcet-consistent single-winner election method.
\newblock {\em Social Choice and Welfare}, 36(2):267--303, 2011.

\bibitem{william1912primary}
William~S. U'Ren.
\newblock Government by proxy now.
\newblock {\em New York Times}, 1912.

\bibitem{wood2014ethereum}
Gavin Wood et~al.
\newblock Ethereum: A secure decentralised generalised transaction ledger.
\newblock {\em Ethereum project yellow paper}, 151(2014):1--32, 2014.

\bibitem{yang2018decentralized}
Xuechao Yang, Xun Yi, Surya Nepal, and Fengling Han.
\newblock Decentralized voting: a self-tallying voting system using a smart
  contract on the ethereum blockchain.
\newblock In {\em International Conference on Web Information Systems
  Engineering}, pages 18--35. Springer, 2018.

\end{thebibliography}
\end{document}